\def\be{\begin{equation}}
\def\ee{\end{equation}}
\def\ba{\begin{array}}
\def\ea{\end{array}}

\documentclass[twocolumn,
showpacs,preprintnumbers,amsmath,amssymb]{revtex4}
\usepackage{multirow}
\usepackage{times,epsfig,amsthm,amssymb,amsfonts,amsmath,graphicx}
\def\qed{\leavevmode\unskip\penalty9999 \hbox{}\nobreak\hfill
    \quad\hbox{\leavevmode  \hbox to.77778em{%
               \hfil\vrule   \vbox to.675em%
               {\hrule width.6em\vfil\hrule}\vrule\hfil}}
     \par\vskip3pt}

\newtheorem{theorem}{Theorem}

\newtheorem{lemma}{Lemma}
\newcommand{\ket}[1]{|#1\rangle}
\newcommand{\bra}[1]{\langle #1|}

\newcommand{\beq}{\begin{equation}}
\newcommand{\eeq}{\end{equation}}
\newcommand{\beqa}{\begin{eqnarray}}
\newcommand{\eeqa}{\end{eqnarray}}

\begin{document}

\title{ General Monogamy Relations for Multiqubit W-class States in terms of convex-roof extended negativity of assistance and squared R\'{e}nyi-$\alpha$ entanglement}

\author{Yanying Liang$^{1}$, Xiufang Feng$^{1}$ and Wei Chen$^{2}$}

\affiliation{$^{1}$ School of Mathematics, South China University of Technology, Guangzhou 510641, China\\
$^{2}$School of Computer Science and Network Security, Dongguan University of Technology, Dongguan 523808, China}

\begin{abstract}

For multipartite entangled states, entanglement monogamy is an important property. We present some new analytical monogamy inequalities satisfied by the $x$-th power of the dual of convex-roof extended negativity (CREN), namely CREN of Assistance (CRENoA), with $x\geq2$ and $x\leq0$ for multiqubit generalized W-class states. We also provide the upper bound of the squared R\'{e}nyi-$\alpha$ entanglement (SR$\alpha$E) with $\alpha$ in the region $[(\sqrt 7  - 1)/2,(\sqrt {13}  - 1)/2]$ for multiqubit generalized W-class states.
\end{abstract}

\pacs{03.67.Mn, 03.65.Ud}

\maketitle

\section{Introduction}

While classical correlation can be freely shared among parties in
multi-party systems, quantum entanglement is restricted in its
shareability. If a pair of parties are maximally entangled in
multipartite systems, they cannot have any entanglement \cite{CKW,OV}
nor classical correlations \cite{KW} with the rest of the system.
This restriction of entanglement shareability among multi-party systems is known as
the monogamy of entanglement (MoE) \cite{t1, t2, t3, t4, t5, t6}.

The monogamy of entanglement (MoE) is one of the fundamental differences between quantum entanglement and classical correlations that a quantum system entangled with one of the other systems limits its entanglement with the remaining others. For example, MoE is a key ingredient to make quantum
cryptography secure because it quantifies how much information
an eavesdropper could potentially obtain about the secret key to be
extracted \cite{JMR}.

Coffman, Kundu, and Wootters established the first quantitative characterization of the MoE for the squared concurrence (SC) \cite{WKW,TJO, YKB, ZXN1, JSK1} in an arbitrary three-qubit quantum state.
Another two well-known entanglement measures are convex-roof extended negativity (CREN) \cite{SL} and R\'{e}nyi-$\alpha$ entanglement (R$\alpha$E) \cite{RH}. CREN is a good alternative for MoE without any known example violating its property even in higher-dimensional systems and R$\alpha$E is the generalization of entanglement of formation. Recently, the general monogamy relations for the $x$-th power of CREN has been shown for a mixed state $\rho_{A_1A_2\ldots A_N}$ in a $N$-qubit system \cite{YL},
\begin{equation}\
\mathcal{\widetilde{N}}^x_{A_1|A_2\ldots A_N}\geq \mathcal{\widetilde{N}}^x_{A_1A_2}+...+\mathcal{\widetilde{N}}^x_{A_1A_{N}},
\end{equation}
for $x\geq2$
and
\begin{equation}
\mathcal{\widetilde{N}}^x_{A_1|A_2\ldots A_N}< \mathcal{\widetilde{N}}^x_{A_1A_2}+...+\mathcal{\widetilde{N}}^x_{A_1A_{N}},  \end{equation}
for $x\leq0$.
Two years ago, Wei Song and Yan-Kui Bai showed the properties of the squared R\'{e}nyi-$\alpha$ entanglement (SR$\alpha$E) and proved that the lower bound of SR$\alpha$E in an arbitrary $N$-qubit mixed state \cite{WS},
\begin{equation}\
E_\alpha^2(\rho_{A_1|A_2\dots A_n})\geq
E_\alpha^2(\rho_{A_1A_2})+...+E_\alpha^2(\rho_{A_1A_n}),
\end{equation}
where $E^2_\alpha(\rho_{A_1|A_2\dots A_n})$ quantifies the entanglement in the partition
$A_1|A_2\dots A_n$ and $E_\alpha^2(\rho_{A_1A_i})$ quantifies the one in two-qubit subsystem $A_1A_i$
with the order $\alpha\ge (\sqrt 7  - 1)/2$.

In this paper, we show the general monogamy relations for the $x$-th power of CRENoA of generalized multiqubit W-class states. This part provides a more efficient way for MoE. We also prove that the SR$\alpha$E with the order $\alpha$ ranges in the region $[(\sqrt 7  - 1)/2, (\sqrt {13}  - 1)/2]$ also obeys a general monogamy relation for arbitrary generalized multiqubit W-class states.

\section{Monogamy of concurrence and convex-roof extended negativity}

Given a bipartite pure state $|\psi\rangle_{AB}$ in a $d \otimes d^{'} (d\leq d^{'})$ quantum system,  its concurrence,  $C(|\psi\rangle_{AB})$ is defined as \cite{PR}
\begin{equation}\label{Concur}
\mathcal{C}(|\psi\rangle_{AB}) = \sqrt{2[1-Tr(\rho_{A}^{2})]},
\end{equation}
where $\rho_{A}$ is reduced density matrix by tracing over the subsystem $B, $ $\rho_{A} = Tr_{B}(|\psi\rangle_{AB}\langle\psi|)$ (and analogously for $\rho_{B}$). For any mixed state $\rho_{AB}, $ the concurrence is given by the minimum average concurrence taken over all decompositions of $\rho_{AB}, $
the so-called convex roof
\begin{equation}
\mathcal{C}(\rho_{AB}) = \min_{\{p_{i}, |\psi_{i}\rangle\}}\sum_{i}p_{i}\mathcal{C}(|\psi_{i}\rangle),
\end{equation}
where the convex roof is notoriously hard to evaluate and therefore it is difficult to determine whether or not an arbitrary state is entangled.

Similarly,  the concurrence of assistance (CoA) of any mixed state $\rho_{AB}$ is defined as \cite{CSY}
\begin{equation}
\mathcal{C}_{a}(\rho_{AB}) = \max_{\{p_{i}, |\psi_{i}\rangle\}}\sum_{i}p_{i}\mathcal{C}(|\psi_{i}\rangle),
\end{equation}
where the maximum is taken over all possible pure state decompositions $\{p_{i}, |\psi_{i}\rangle\}$ of $\rho_{AB}$.

Another well-known quantification of bipartite entanglement is convex-roof extended negativity (CREN).
For a bipartite mixed state $\rho_{AB}, $ CREN is defined as
\begin{equation}
\widetilde{\mathcal{N}}(\rho_{AB}) = \min_{\{p_{i}, |\psi_{i}\rangle\}}\sum_{i}p_{i}\mathcal{N}(|\psi_{i}\rangle),
\end{equation}
where the minimum is taken over all possible pure state decompositions $\{p_{i}, |\psi_{i}\rangle\}$ of $\rho_{AB}$.

Similar to the duality between concurrence and CoA,  we can also define a dual to CREN,  namely CRENoA,  by taking the maximum value of average negativity over all
possible pure state decomposition,  i.e.
\begin{equation}
\widetilde{\mathcal{N}}_{a}(\rho_{AB}) = \max_{\{p_{i}, |\psi_{i}\rangle\}}\sum_{i}p_{i}\mathcal{N}(|\psi_{i}\rangle),
\end{equation}
where the maximum is taken over all possible pure state decompositions $\{p_{i}, |\psi_{i}\rangle\}$ of $\rho_{AB}$.

In the following we study the monogamy property of the CRENoA for the $n$-qubit
generalized W-class states $|\psi\rangle\in H_{A_1}\otimes H_{A_2}\otimes...\otimes H_{A_n}$ defined by
\begin{equation}
|\psi\rangle=a|000...\rangle+b_1|01...0\rangle+...+b_n|00...1\rangle,
\end{equation}
with $|a|^2+\sum_{i=1}^{n}|b_i|=1.$

\begin{lemma}\label{c=ca}
For $n$-qubit generalized W-class states (9),   we have
\begin{eqnarray}\label{cca}
\widetilde{\mathcal{N}}(\rho_{A_1A_i})=\widetilde{\mathcal{N}}_{a}(\rho_{A_1A_i}),
\end{eqnarray}
where $\rho_{A_1A_i}=Tr_{A_2...A_{i-1}A_{i+1}...A_{n}}(|\psi\rangle\langle\psi|)$.
\end{lemma}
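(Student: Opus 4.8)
The plan is to convert the negativity statement into a concurrence statement and then exploit the rigid single-excitation structure of the W-class state. First I would invoke the standard fact that for an arbitrary two-qubit \emph{pure} state the negativity equals the concurrence, $\mathcal{N}(\ket{\psi})=\mathcal{C}(\ket{\psi})$ (both equal $2\sqrt{\mu_0\mu_1}$ in terms of the two Schmidt coefficients $\mu_0,\mu_1$). Since $\widetilde{\mathcal{N}}$ and $\widetilde{\mathcal{N}}_a$ are, respectively, the convex roof and the concave roof of $\mathcal{N}$ over pure-state decompositions of the \emph{two-qubit} state $\rho_{A_1A_i}$, this pointwise identity makes the objective $\sum_i p_i\mathcal{N}(\ket{\psi_i})=\sum_i p_i\mathcal{C}(\ket{\psi_i})$ agree on every decomposition. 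Taking the minimum gives $\widetilde{\mathcal{N}}(\rho_{A_1A_i})=\mathcal{C}(\rho_{A_1A_i})$ and taking the maximum gives $\widetilde{\mathcal{N}}_a(\rho_{A_1A_i})=\mathcal{C}_a(\rho_{A_1A_i})$. Hence it suffices to prove $\mathcal{C}(\rho_{A_1A_i})=\mathcal{C}_a(\rho_{A_1A_i})$.

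Next I would write $\rho_{A_1A_i}$ explicitly. Grouping the remaining qubits as $R=A_2\ldots A_{i-1}A_{i+1}\ldots A_n$, the state (9) factorizes as $\ket{\psi}=\ket{\Psi}_{A_1A_i}\ket{\vec{0}}_R+\ket{00}_{A_1A_i}\ket{\chi}_R$, where $\ket{\Psi}=a\ket{00}+b_1\ket{10}+b_i\ket{01}$ carries the amplitudes of the two retained qubits, $\ket{\vec{0}}_R$ is the all-zero state of $R$, and $\ket{\chi}_R=\sum_{j\ne 1,i}b_j\ket{j}_R$ collects the excitations living outside $A_1A_i$. Because each $\ket{j}_R$ carries one excitation, $\ket{\chi}_R\perp\ket{\vec{0}}_R$, so tracing out $R$ annihilates the cross terms and yields the rank-two mixture
\begin{equation}
\rho_{A_1A_i}=\proj{\Psi}+\|\chi\|^2\,\proj{00},
\end{equation}
which is manifestly supported on $\mathrm{span}\{\ket{00},\ket{01},\ket{10}\}$ and carries no doubly-excited $\ket{11}$ component.

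The heart of the argument is the spin-flipped product $\tilde\rho_{A_1A_i}=(\sigma_y\otimes\sigma_y)\rho_{A_1A_i}^{*}(\sigma_y\otimes\sigma_y)$. Using $\sigma_y\otimes\sigma_y:\ket{00}\mapsto-\ket{11}$, $\ket{01}\mapsto\ket{10}$, $\ket{10}\mapsto\ket{01}$, I would observe that $\tilde\rho_{A_1A_i}=\proj{\tilde\Psi}+\|\chi\|^2\proj{11}$ is supported on $\mathrm{span}\{\ket{01},\ket{10},\ket{11}\}$, where $\ket{\tilde\Psi}=\bar b_i\ket{10}+\bar b_1\ket{01}-\bar a\ket{11}$. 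Multiplying $\rho_{A_1A_i}$ by $\tilde\rho_{A_1A_i}$, three of the four resulting terms vanish by orthogonality, since $\langle\Psi|11\rangle=0$, $\langle 00|\tilde\Psi\rangle=0$, and $\langle 00|11\rangle=0$, leaving only the single rank-one contribution $2\bar b_1\bar b_i\ket{\Psi}\bra{\tilde\Psi}$. Thus $\rho_{A_1A_i}\tilde\rho_{A_1A_i}$ has exactly one nonzero eigenvalue, which I would read off from its trace as $4|b_1|^2|b_i|^2$, so the ordered square roots are $\lambda_1=2|b_1||b_i|$ and $\lambda_2=\lambda_3=\lambda_4=0$.

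Finally I would feed this spectrum into the two closed formulas, Wootters' concurrence $\mathcal{C}(\rho_{A_1A_i})=\max\{0,\lambda_1-\lambda_2-\lambda_3-\lambda_4\}$ and the assistance formula $\mathcal{C}_a(\rho_{A_1A_i})=\lambda_1+\lambda_2+\lambda_3+\lambda_4$. With a single surviving $\lambda_1$, both collapse to $2|b_1||b_i|$, giving $\mathcal{C}(\rho_{A_1A_i})=\mathcal{C}_a(\rho_{A_1A_i})$ and hence the claimed identity $\widetilde{\mathcal{N}}(\rho_{A_1A_i})=\widetilde{\mathcal{N}}_a(\rho_{A_1A_i})$. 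The only genuinely delicate step is the rank-one collapse of $\rho_{A_1A_i}\tilde\rho_{A_1A_i}$; everything hinges on the W-class state having no $\ket{11}$ component on any retained pair $A_1A_i$, which is exactly what forces all but one of the spin-flip eigenvalues to vanish and thereby pins the minimizing and maximizing decompositions of the negativity to the same value.
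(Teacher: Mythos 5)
Your proof is correct, but it takes a genuinely different route from the paper's. The paper starts from the same rank-two decomposition you derive, $\rho_{A_1A_i}=\proj{x}+\proj{y}$ with $\ket{x}=a\ket{00}+b_1\ket{10}+b_i\ket{01}$ and $\ket{y}\propto\ket{00}$ (your $\ket{\Psi}$ and $\|\chi\|\ket{00}$), but then invokes the HJW theorem: every pure-state decomposition $\{p_h,\ket{\tilde\phi_h}\}$ of $\rho_{A_1A_i}$ consists of states $u_{h1}\ket{x}+u_{h2}\ket{y}$ with $(u_{hk})$ drawn from a unitary matrix, and a direct computation gives $p_h\,\mathcal{N}(\ket{\tilde\phi_h})=2|u_{h1}|^2|b_1||b_i|$, so that by column normalization $\sum_h|u_{h1}|^2=1$ the average negativity equals $2|b_1||b_i|$ for \emph{every} decomposition, making the minimum and maximum coincide trivially. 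You instead convert to concurrence via $\mathcal{N}=\mathcal{C}$ on two-qubit pure states (which the paper also uses) and then apply two closed formulas: Wootters' $\mathcal{C}=\max\{0,\lambda_1-\lambda_2-\lambda_3-\lambda_4\}$ and the Laustsen--Verstraete--van Enk fidelity formula $\mathcal{C}_a=\lambda_1+\lambda_2+\lambda_3+\lambda_4$, exploiting that the absence of any $\ket{11}$ component makes $\rho_{A_1A_i}\tilde\rho_{A_1A_i}$ rank one, so only $\lambda_1=2|b_1||b_i|$ survives and both formulas agree. Your orthogonality bookkeeping ($\langle\Psi|11\rangle=\langle 00|\tilde\Psi\rangle=\langle 00|11\rangle=0$) and the trace evaluation $4|b_1|^2|b_i|^2$ are correct, and reading the sole nonzero eigenvalue off the trace is legitimate for a rank-one operator. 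What each approach buys: the paper's HJW argument proves the stronger fact of decomposition-independence of the average negativity and needs no formula for the assistance quantity, while your route is more mechanical, relies only on standard two-qubit closed-form results (you should cite the $\mathcal{C}_a=\sum_i\lambda_i$ formula explicitly, as it is two-qubit-specific and not proved in this paper), and handles the degenerate case $b_1b_i=0$ automatically, since then $\rho\tilde\rho=0$ and both sides vanish.
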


\begin{proof}
We assume
$\rho_{A_1A_i}=|x\rangle_{A_1A_i}\langle x|+|y\rangle_{A_1A_i}\langle y|$ \cite{JSK2},
where
$$
\ba{l}
|x\rangle_{A_1A_{i}}=a|00\rangle_{A_1A_{i}} +b_1|10\rangle_{A_1A_{i}}+b_i|01\rangle_{A_1A_{i}},  \\[2mm]
|y\rangle_{A_1A_{i}}=\sqrt{\sum_{k\neq i}|b_k|^2}|00\rangle_{A_1A_{i}}.
\ea
$$

From the HJW theorem in Ref. \cite{JSK2},  for any pure-state decomposition of
$\rho_{A_1A_{i}}=\sum_{h=1}^{r}|\phi_h\rangle_{A_1A_{i}}\langle\phi_h|$,    one has
$|\phi_h\rangle_{A_1A_{i}} =u_{h1}|x\rangle_{A_1A_{i}}+u_{h2}|y\rangle_{A_1A_{i}}$ for
some $r\times r$ unitary matrices $u_{h1}$ and $u_{h2}$ for each $h$.
Consider the normalized bipartite pure state $|\tilde{\phi_h}\rangle_{A_1A_{i}}=|\phi_h\rangle_{A_1A_{i}}/\sqrt{p_h}$ with $p_h=|\langle\phi_h|\phi_h\rangle|$.
In Ref. \cite{JSK3},    for any bipartite pure state $|\psi\rangle$,    one has
\begin{equation}\nonumber
C(|\psi\rangle)=\mathcal{N}(|\psi\rangle).
\end{equation}
and combining with the Lemma 1 in Ref. \cite{ZXN2}, for $|\tilde{\phi_h}\rangle_{A_1A_{i}}$,   we have
\begin{equation}\nonumber
\mathcal{N}(|\tilde{\phi_h}\rangle_{A_1A_{i}})=\frac{2}{p_h}|u_{hi}|^2|b_1||b_i|.
\end{equation}

Then combing (7) and (8),  we can obtain
\begin{eqnarray}\nonumber
\widetilde{\mathcal{N}}(\rho_{A_1A_i})
&=&
\min_{\{p_h, |\tilde{\phi_h}\rangle_{A_1A_{i}}\}}
\sum_hp_h\mathcal{N}(|\tilde{\phi_h}\rangle_{A_1A_{i}})\\[1mm]\nonumber
&=&\max_{\{p_h, |\tilde{\phi_h}\rangle_{A_1A_{i}}\}}
\sum_hp_h\mathcal{N}(|\tilde{\phi_h}\rangle_{A_1A_{i}})\\[1mm]\nonumber
&=&\widetilde{\mathcal{N}}_{a}(\rho_{A_1A_i}).
\end{eqnarray}
\end{proof}

\begin{theorem}\label{TH2}
For the $n$-qubit generalized W-class states $|\psi\rangle\in H_{A_1}\otimes H_{A_2}\otimes...\otimes H_{A_n}$,    the
CRENoA satisfies
\begin{eqnarray}\label{cax}
\widetilde{\mathcal{N}}_{a}^x(\rho_{A_1|A_{j_1}...A_{j_{m-1}}})\geq\frac{x}{2^x-1}\sum_{i=1}^{{m-1}}\widetilde{\mathcal{N}}_{a}^x(\rho_{A_1A_{j_i}}), \end{eqnarray}
where $x\geq2$ and $\rho_{A_1A_{j_1}...A_{j_{m-1}}}$ is the $m$-qubit,    $2\leq m\leq n$,    reduced density matrix of $|\psi\rangle$.
\end{theorem}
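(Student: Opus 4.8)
\emph{Proof strategy.} The plan is to split the inequality into two independent pieces: an exact evaluation of both sides for the W-class state, which turns the statement into a clean relation between two numbers, and a one-line power inequality. First I would handle the right-hand side. From the computation inside the proof of Lemma~\ref{c=ca} the two-qubit reduced state satisfies $\widetilde{\mathcal{N}}(\rho_{A_1A_{j_i}})=\widetilde{\mathcal{N}}_{a}(\rho_{A_1A_{j_i}})=2|b_1||b_{j_i}|$, so that $\sum_{i=1}^{m-1}\widetilde{\mathcal{N}}_{a}^{2}(\rho_{A_1A_{j_i}})=4|b_1|^{2}\sum_{i=1}^{m-1}|b_{j_i}|^{2}$. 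The real work is evaluating the left-hand side, and this is where the W-class structure is essential.

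To evaluate $\widetilde{\mathcal{N}}_{a}(\rho_{A_1|A_{j_1}\dots A_{j_{m-1}}})$ I would extend the HJW argument of Lemma~\ref{c=ca} to the $m$-qubit reduced matrix, which decomposes as $\rho_{A_1A_{j_1}\dots A_{j_{m-1}}}=|X\rangle\langle X|+|Y\rangle\langle Y|$ with $|X\rangle=a|0\dots0\rangle+b_1|10\dots0\rangle+\sum_{i}b_{j_i}|0\dots1\dots0\rangle$ and $|Y\rangle$ proportional to the product (vacuum) vector $|0\dots0\rangle$. Every pure-state decomposition then has the form $|\phi_h\rangle=u_{h1}|X\rangle+u_{h2}|Y\rangle$; writing each normalized $|\tilde{\phi_h}\rangle$ across the cut $A_1|(A_{j_1}\dots A_{j_{m-1}})$ as $|0\rangle_{A_1}|\mu'_h\rangle+|1\rangle_{A_1}|\nu'_h\rangle$ and using $\mathcal{N}=C=2\sqrt{\det\rho_{A_1}}$, I expect the unnormalized $2\times2$ Gram determinant to collapse to $|u_{h1}|^{4}|b_1|^{2}\sum_{i}|b_{j_i}|^{2}$, giving $p_h\mathcal{N}(|\tilde{\phi_h}\rangle)=2|u_{h1}|^{2}|b_1|\sqrt{\sum_i|b_{j_i}|^2}$. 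Summing over $h$ with $\sum_h|u_{h1}|^{2}=1$ makes the average independent of the decomposition, so the maximum coincides with the minimum and $\widetilde{\mathcal{N}}_{a}(\rho_{A_1|A_{j_1}\dots A_{j_{m-1}}})=2|b_1|\sqrt{\sum_{i=1}^{m-1}|b_{j_i}|^2}$. In particular this yields the squared monogamy \emph{equality} $\widetilde{\mathcal{N}}_{a}^{2}(\rho_{A_1|A_{j_1}\dots A_{j_{m-1}}})=\sum_{i=1}^{m-1}\widetilde{\mathcal{N}}_{a}^{2}(\rho_{A_1A_{j_i}})$.

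Writing $N=\widetilde{\mathcal{N}}_{a}(\rho_{A_1|A_{j_1}\dots A_{j_{m-1}}})$ and $N_i=\widetilde{\mathcal{N}}_{a}(\rho_{A_1A_{j_i}})$, the equality gives $N^{x}=(\sum_i N_i^{2})^{x/2}$. Since $x\ge2$ means $x/2\ge1$, the superadditivity $(\sum_i s_i)^{p}\ge\sum_i s_i^{p}$ for $p\ge1$ and $s_i\ge0$ (applied with $s_i=N_i^{2}$) gives $N^{x}\ge\sum_i N_i^{x}$; and because $2^{x}\ge x+1$ for $x\ge2$ we have $\tfrac{x}{2^{x}-1}\le1$, so $N^{x}\ge\sum_i N_i^{x}\ge\tfrac{x}{2^{x}-1}\sum_i N_i^{x}$, which is exactly \eqref{cax}. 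The main obstacle is the decomposition-independence in the middle step: I must verify that the vacuum term $|Y\rangle$ contributes no entanglement across the cut while still being carried correctly through the normalization $p_h$, so that the maximization defining CRENoA is trivial. I would also remark that the constant $\tfrac{x}{2^{x}-1}$ is not optimal here, since the coefficient $1$ already holds; the weaker form is stated for uniformity with the general framework.
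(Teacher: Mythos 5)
Your proposal is correct, but it takes a genuinely different route from the paper's. The paper never evaluates $\widetilde{\mathcal{N}}_{a}(\rho_{A_1|A_{j_1}\dots A_{j_{m-1}}})$ at all: it first derives the generic gap inequality $\widetilde{\mathcal{N}}_{a}^x\geq\frac{x}{2^x-1}\widetilde{\mathcal{N}}^x$ across the cut (from $\widetilde{\mathcal{N}}_{a}\geq\widetilde{\mathcal{N}}$ combined with $(1+t)^x\geq 1+xt^x$), then invokes the mixed-state CREN monogamy relation (1) cited from Ref.~\cite{YL} to distribute $\widetilde{\mathcal{N}}^x$ over the two-qubit terms, and finally uses Lemma~1 to convert each two-qubit $\widetilde{\mathcal{N}}$ into $\widetilde{\mathcal{N}}_{a}$. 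You instead push the HJW rank-2 computation of Lemma~1 through the full $1|(m-1)$ cut: the reduced state is indeed $|X\rangle\langle X|+|Y\rangle\langle Y|$ with $|Y\rangle$ proportional to the vacuum (this is exactly the structure behind the paper's Lemma~2), your Gram-determinant collapse to $|u_{h1}|^4|b_1|^2\sum_i|b_{j_i}|^2$ checks out, and since $p_h\mathcal{N}(|\tilde{\phi_h}\rangle)=2|u_{h1}|^2|b_1|\sqrt{\sum_i|b_{j_i}|^2}$ with $\sum_h|u_{h1}|^2=1$ by the isometry condition, the average negativity is decomposition-independent, so $\widetilde{\mathcal{N}}_{a}=\widetilde{\mathcal{N}}=2|b_1|\sqrt{\sum_i|b_{j_i}|^2}$ and the squared monogamy relation becomes an exact equality; the concluding steps (superadditivity of $t\mapsto t^{x/2}$ for $x\geq2$ and $\frac{x}{2^x-1}\leq 1$) are elementary and valid, and the degenerate cases ($b_1=0$, or $m=n$ where $|Y\rangle=0$) cause no trouble. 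What each approach buys: yours is self-contained — it does not rely on the mixed-state monogamy inequality of Ref.~\cite{YL}, it extends Lemma~1 from two-qubit marginals to the whole $A_1|A_{j_1}\dots A_{j_{m-1}}$ partition, and it proves the strictly stronger inequality with constant $1$ in place of $\frac{x}{2^x-1}$ (with equality at $x=2$), which confirms your closing remark that the paper's constant is not tight for this state class; the paper's chain, by contrast, is modular and would apply verbatim to any family of states for which a CREN monogamy inequality and a two-qubit CREN--CRENoA identity are available, without exploiting the rank-2 structure special to generalized W-class states.
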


\begin{proof}
For the $n$-qubit generalized W-class state $|\psi\rangle$,
according to the definitions of $\widetilde{\mathcal{N}}(\rho)$ and $\widetilde{\mathcal{N}}_{a}(\rho)$,    one has
$\widetilde{\mathcal{N}}_{a}(\rho_{A_1|A_{j_1}...A_{j_{m-1}}})\geq \widetilde{\mathcal{N}}(\rho_{A_1|A_{j_1}...A_{j_{m-1}}})$. When $x\geq2$,    we have
\begin{widetext}
\begin{eqnarray}\nonumber
\widetilde{\mathcal{N}}_{a}^x(\rho_{A_1|A_{j_1}...A_{j_{m-1}}})
&\geq&\left(\frac{\widetilde{\mathcal{N}}_{a}(\rho_{A_1|A_{j_1}...A_{j_{m-1}}})+\widetilde{\mathcal{N}}(\rho_{A_1|A_{j_1}...A_{j_{m-1}}})}{2}\right)^x\\[1mm]\nonumber
&=&\frac{1}{2^x}\widetilde{\mathcal{N}}_{a}^x(\rho_{A_1|A_{j_1}...A_{j_{m-1}}})\left(1+\frac{\widetilde{\mathcal{N}}(\rho_{A_1|A_{j_1}...A_{j_{m-1}}})}{\widetilde{\mathcal{N}}_{a}(\rho_{A_1|A_{j_1}...A_{j_{m-1}}})}\right)^x\\[1mm]\nonumber
&\geq&\frac{1}{2^x}\widetilde{\mathcal{N}}_{a}^x(\rho_{A_1|A_{j_1}...A_{j_{m-1}}})\left(1+x\frac{\widetilde{\mathcal{N}}(\rho_{A_1|A_{j_1}...A_{j_{m-1}}})}{\widetilde{\mathcal{N}}_{a}(\rho_{A_1|A_{j_1}...A_{j_{m-1}}})}\right)^x\\[1mm]\nonumber
&=&\frac{1}{2^x}\widetilde{\mathcal{N}}_{a}^x(\rho_{A_1|A_{j_1}...A_{j_{m-1}}})+\frac{x}{2^x}\widetilde{\mathcal{N}}^x(\rho_{A_1|A_{j_1}...A_{j_{m-1}}})\\[1mm]\nonumber
\end{eqnarray}
\end{widetext}
Here we have used in the first inequality the inequality $a^x\geq(\frac{a+b}{2})^x$ for $a\geq b>0$ and $x\geq0$.
The second inequality is due to $(1+t)^x\geq 1+xt^x$ for $x\geq1$ and $1\geq t\geq0$.

Then we have
\begin{eqnarray}\nonumber
\widetilde{\mathcal{N}}_{a}^x(\rho_{A_1|A_{j_1}...A_{j_{m-1}}})
&\geq&\frac{x}{2^x-1}\widetilde{\mathcal{N}}^x(\rho_{A_1|A_{j_1}...A_{j_{m-1}}})
\end{eqnarray}

Combining with Lemma1,  we have
\begin{eqnarray}\nonumber
\widetilde{\mathcal{N}}_{a}^x(\rho_{A_1|A_{j_1}...A_{j_{m-1}}})
&\geq&\frac{x}{2^x-1}\widetilde{\mathcal{N}}^x(\rho_{A_1|A_{j_1}...A_{j_{m-1}}})\\[1mm]\nonumber
&\geq&\frac{x}{2^x-1}\sum_{i=1}^{m-1}\widetilde{\mathcal{N}}^x(\rho_{A_1A_{j_i}})\\[1mm]\nonumber
&=&\frac{x}{2^x-1}\sum_{i=1}^{m-1}\widetilde{\mathcal{N}}^x_{a}(\rho_{A_1A_{j_i}}).
\end{eqnarray}
The second inequality is due to the monogamy relation for the $x$-th power of CREN (1).
\end{proof}

\begin{theorem}\label{TH3}
For the $n$-qubit generalized W-class state $|\psi\rangle\in H_{A_1}\otimes H_{A_2}\otimes...\otimes H_{A_n}$
with $\widetilde{\mathcal{N}}(\rho_{A_1A_{j_i}})\neq0$ for $1\leq i\leq m-1$,    we have
\begin{eqnarray}\label{cay}
\widetilde{\mathcal{N}}_{a}^y(\rho_{A_1|A_{j_1}...A_{j_{m-1}}})<\frac{y}{2^y-1}\sum_{i=1}^{{m-1}}\widetilde{\mathcal{N}}^y_{a}(\rho_{A_1A_{j_i}}),
\end{eqnarray}
where $y\leq0$ and $\rho_{A_1A_{j_1}...A_{j_{m-1}}}$ is the $m$-qubit reduced density matrix as in Theorem 1.
\end{theorem}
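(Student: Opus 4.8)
The plan is to mirror the proof of Theorem \ref{TH2}, reversing every inequality that involves a power because now $y\le 0$, and to reduce the whole statement to the strict CREN monogamy relation (2) together with Lemma \ref{c=ca}. Throughout I abbreviate $N_a=\widetilde{\mathcal{N}}_a(\rho_{A_1|A_{j_1}\ldots A_{j_{m-1}}})$ and $N=\widetilde{\mathcal{N}}(\rho_{A_1|A_{j_1}\ldots A_{j_{m-1}}})$. The hypothesis $\widetilde{\mathcal{N}}(\rho_{A_1A_{j_i}})\ne 0$ for all $i$ guarantees, via the CREN monogamy bound (1) with exponent $2$, that $N^{2}\ge\sum_i\widetilde{\mathcal{N}}^{2}(\rho_{A_1A_{j_i}})>0$, so $N\ne 0$ and every negative power appearing below is well defined and strictly positive.

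The core of the argument is the single scalar estimate
\begin{equation}\nonumber
N_a^{\,y}\le \frac{y}{2^{y}-1}\,N^{\,y},\qquad y<0 .
\end{equation}
To obtain it I start, as in Theorem \ref{TH2}, from the elementary bound $N_a\ge N>0$ coming from the definitions of $\widetilde{\mathcal{N}}_a$ and $\widetilde{\mathcal{N}}$, and I set $t=N/N_a\in(0,1]$. Two facts suffice: first, $t^{\,y}\ge 1$, because $t\le 1$ and $y\le 0$ make the map $t\mapsto t^{\,y}$ non-increasing; second, the convexity bound $2^{\,y}\ge 1+y$ (equivalently $e^{u}\ge 1+u$ with $u=y\ln 2$), which for $y<0$ gives $2^{\,y}-1\ge y$. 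Multiplying $t^{\,y}\ge 1$ by the non-positive number $y$ yields $y\,t^{\,y}\le y$, and combining it with $2^{\,y}-1\ge y$ produces $2^{\,y}-1\ge y\,t^{\,y}$, i.e. $(2^{\,y}-1)N_a^{\,y}\ge y\,N^{\,y}$. Since $2^{\,y}-1<0$, dividing by it reverses the inequality and gives exactly the displayed estimate.

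The decisive sign check is that the prefactor is positive: for $y<0$ both $y$ and $2^{\,y}-1$ are negative, so $\frac{y}{2^{y}-1}>0$. This is what lets me feed the scalar estimate into the strict CREN monogamy relation (2),
\begin{equation}\nonumber
N^{\,y}<\sum_{i=1}^{m-1}\widetilde{\mathcal{N}}^{\,y}(\rho_{A_1A_{j_i}}),
\end{equation}
without flipping its direction, since multiplying through by $\frac{y}{2^{y}-1}>0$ preserves ``$<$''. Finally I invoke Lemma \ref{c=ca} to replace each $\widetilde{\mathcal{N}}(\rho_{A_1A_{j_i}})$ by $\widetilde{\mathcal{N}}_a(\rho_{A_1A_{j_i}})$, arriving at
\begin{equation}\nonumber
N_a^{\,y}\le \frac{y}{2^{y}-1}\,N^{\,y}<\frac{y}{2^{y}-1}\sum_{i=1}^{m-1}\widetilde{\mathcal{N}}_a^{\,y}(\rho_{A_1A_{j_i}}),
\end{equation}
which is precisely (\ref{cay}).

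I expect the main obstacle to be bookkeeping of signs rather than any genuine analysis: because $y\le 0$, both the monotonicity of $t\mapsto t^{\,y}$ and the final rearrangement ``divide by $2^{\,y}-1$'' reverse the inequalities used in Theorem \ref{TH2}, and one must confirm that these two reversals, together with the now strict and reversed monogamy relation (2) and the positivity of $\frac{y}{2^{y}-1}$, conspire to give the strict ``$<$'' in the stated direction. The boundary value $y=0$ is excluded from the scalar step, where the prefactor is a $0/0$ form; it is handled separately by continuity, the limiting prefactor being $1/\ln 2$.
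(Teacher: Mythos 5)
Your proposal is correct, and its skeleton is exactly the paper's: first the scalar estimate $\widetilde{\mathcal{N}}_{a}^{y}(\rho_{A_1|A_{j_1}\ldots A_{j_{m-1}}})\le\frac{y}{2^{y}-1}\widetilde{\mathcal{N}}^{y}(\rho_{A_1|A_{j_1}\ldots A_{j_{m-1}}})$, then the strict CREN monogamy relation (2) for $y\le 0$, then Lemma \ref{c=ca} to trade $\widetilde{\mathcal{N}}$ for $\widetilde{\mathcal{N}}_{a}$ on the two-qubit marginals. Where you genuinely differ is in how the scalar step is obtained: the paper merely says ``similar to the proof of Theorem 1,'' and a literal mirroring of that chain does not work, since the needed analogue $(1+t)^{y}\le 1+yt^{y}$ of the inequality used in Theorem \ref{TH2} is false for $y<0$ (at $t=1$, $y=-1$ it reads $\tfrac12\le 0$), and the final division by the negative quantity $2^{y}-1$ would flip a naively reversed chain into the wrong direction. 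Your direct argument --- $t^{y}\ge 1$ so $yt^{y}\le y$, combined with $2^{y}-1\ge y$ from convexity, hence $2^{y}-1\ge yt^{y}$, then multiplying by $N_a^{y}>0$ and dividing by $2^{y}-1<0$ --- is precisely the estimate $2^{y}\ge 1+yt^{y}$ that makes the asserted step rigorous (one can even shortcut it further: $N_a^{y}\le N^{y}$ and $\frac{y}{2^{y}-1}\ge 1$). You also supply two points of hygiene the paper omits: you use the hypothesis $\widetilde{\mathcal{N}}(\rho_{A_1A_{j_i}})\neq 0$, via relation (1) at $x=2$, to guarantee $\widetilde{\mathcal{N}}(\rho_{A_1|A_{j_1}\ldots A_{j_{m-1}}})>0$ so that all negative powers are defined, and you flag that at $y=0$ the prefactor is a $0/0$ form to be read as $1/\ln 2$ (there, rather than invoking continuity --- which would only yield a non-strict inequality --- one can check directly that $1<(m-1)/\ln 2$ for $m\ge 2$). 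In short: same route as the paper, but your version actually closes the step the paper waves at.
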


\begin{proof}
 Similar to the proof of Theorem1,    for $y\leq0$,    we get
\begin{eqnarray}\nonumber
\widetilde{\mathcal{N}}_{a}^y(\rho_{A_1|A_{j_1}...A_{j_{m-1}}})
&\leq&\frac{y}{2^y-1}\widetilde{\mathcal{N}}^y(\rho_{A_1|A_{j_1}...A_{j_{m-1}}})
\end{eqnarray}

Combining with Lemma 1,    we have
\begin{eqnarray}\nonumber
\widetilde{\mathcal{N}}^y_{a}(\rho_{A_1|A_{j_1}...A_{j_{m-1}}})
&\leq &\frac{y}{2^y-1}\widetilde{\mathcal{N}}^y(\rho_{A_1|A_{j_1}...A_{j_{m-1}}})\\[1mm]\nonumber
&< &\frac{y}{2^y-1}\sum_{i=1}^{m-1}\widetilde{\mathcal{N}}^y(\rho_{A_1A_{j_i}})\\[1mm]\nonumber
&=& \frac{y}{2^y-1}\sum_{i=1}^{m-1}\widetilde{\mathcal{N}}^y_a(\rho_{A_1A_{j_i}}).
\end{eqnarray}
The second inequality is due to the monogamy relation for the $x$-th power of CREN (2).
\end{proof}

As an example,    consider the $5$-qubit generalized $W$-class states (9) with
$a=b_2=\frac{1}{\sqrt{10}}$,    $b_1=\frac{1}{\sqrt{15}}$,    $b_3=\sqrt{\frac{2}{15}}$,    $b_4=\sqrt{\frac{3}{5}}$.
We have

$$\widetilde{\mathcal{N}}_{a}(\rho_{A_1|A_{2}A_{3}})\geq\sqrt[x]{\frac{x}{2^x-1}}\frac{2}{\sqrt{15}}\sqrt[x]{(\frac{1}{\sqrt{10}})^x
+(\sqrt{\frac{2}{15}})^x}
$$
and
$$\widetilde{\mathcal{N}}_{a}(\rho_{A_1|A_{2}A_{3}A_{4}})
\geq\sqrt[x]{\frac{x}{2^x-1}}\frac{2}{\sqrt{15}}\sqrt[x]{(\frac{1}{\sqrt{10}})^x
+(\sqrt{\frac{2}{15}})^x+(\sqrt{\frac{3}{5}})^x}
$$
with $x\geq2$. The optimal lower bounds can be obtained by varying the parameter $x$,
see Fig. 1.

\begin{figure}[h]
\normalsize
\renewcommand{\figurename}{Fig.}
\centering
\includegraphics[width=0.5\textwidth,    height=0.35\textwidth]{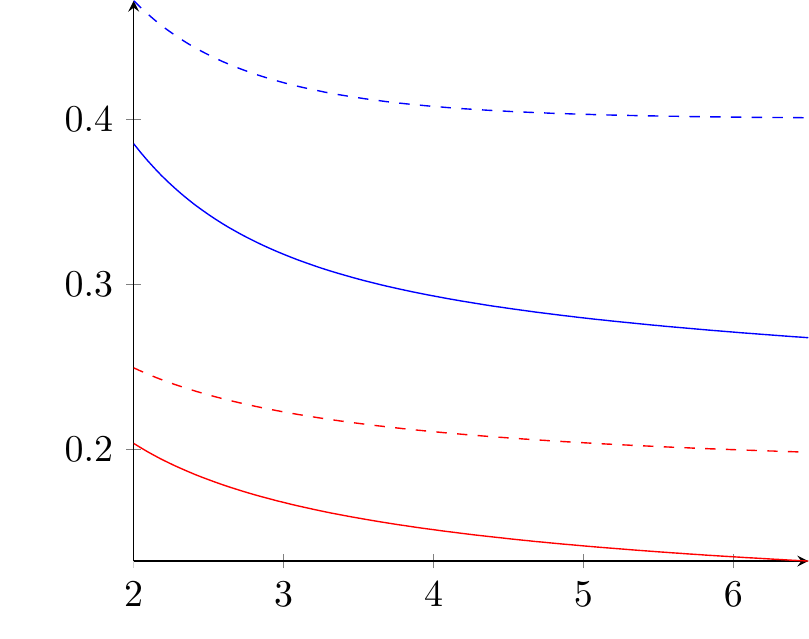}
\caption{{\small solid red line is the lower bound of $\widetilde{\mathcal{N}}_{a}(\rho_{A_1|A_2A_3})$ and solid blue line is the lower bound of $\widetilde{\mathcal{N}}_{a}(\rho_{A_1|A_2A_3A_4})$ as functions of $x\geq2$ from our result,   red dashed line is the lower bound of $C_a(\rho_{A_1|A_2A_3})$ and blue dashed line is the lower bound of $C_a(\rho_{A_1|A_2A_3A_4})$ as functions of $x\geq2$ from \cite{ZXN2}.}}
\label{Fig.1}
\end{figure}

From Fig.1,   one gets that the
optimal lower bounds of $\widetilde{\mathcal{N}}_{a}(\rho_{A_1|A_2A_3})$ and $\widetilde{\mathcal{N}}_{a}(\rho_{A_1|A_2A_3A_4})$ are $0.203$ and $0.385$,     respectively,     attained at $x=2$ while the lower bounds of each in terms of CoA are given by $0.249$ and $0.471$ \cite{ZXN2}. One can see that choosing CRENoA as a mathematical characterization of MoE is better than choosing CoA for $x\geq2$.

\section{Monogamy of R\'{e}nyi-$\alpha$ entanglement}

R\'{e}nyi-$\alpha$ entanglement (R$\alpha$E) is well-defined entanglement measure which is the generalization of entanglement of formation.
For a bipartite pure state $\left| \psi  \right\rangle _{AB}$,     the R$\alpha$E is defined as
\beq\label{q3}
E_{\alpha}(\left| \psi  \right\rangle _{AB}):= S_\alpha(\rho_A) = \frac{1}{1-\alpha}\log _2(\mbox{tr}\rho _A^\alpha)
\eeq
where the R\'{e}nyi-$\alpha$ entropy is $S_\alpha(\rho_A)=[\log _2(\sum_i\lambda_i^{\alpha})]/(1-\alpha)$ with
$\alpha$ being a nonnegative real number and $\lambda_i$ being the eigenvalue of reduced density matrix
$\rho_A$.
For a bipartite mixed state $\rho _{AB}$,     the R$\alpha$E
is defined via the convex-roof extension
\beqa\label{q4}
E_\alpha(\rho _{AB})=\mbox{min} \sum_i p_i E_\alpha(\ket{\psi _i }_{AB})
\eeqa
where the minimum is taken over all possible pure state decompositions of ${\rho_{AB}=\sum\limits_i{p_i
\left| {\psi _i } \right\rangle _{AB} \left\langle {\psi _i } \right|} }$. In particular,     for a
two-qubit mixed state,     the R$\alpha$E with $\alpha\geq 1$ has an analytical formula which is expressed
as a function of the SC \cite{JSK3}
\beqa\label{q5}
E_\alpha \left( {\rho _{AB} } \right) = f_\alpha \left[ {C^2 \left( {\rho _{AB} } \right)} \right]
\eeqa
where the function $f_\alpha \left( x \right)$ has the form
\beq\label{q6}
f_\alpha \! \left( x \right)\!= \!\frac{1}{{1 - \alpha }}\!\log _2 \!\left[ {\left( {\frac{{1 \!-\!
\sqrt {1 - x} }}{2}} \right)^\alpha  \!\!\!\! +\! \left( {\frac{{1 \!+\! \sqrt {1 - x} }}{2}}
\right)^\alpha  } \right].
\eeq
Recently,     Wang \emph{et al} further proved that the formula in (14) holds for the order
$\alpha\geq(\sqrt{7}-1)/2\simeq 0.823$ \cite{YXW}.

From Theorem 2 in Ref. \cite{WS},    one has that for a bipartite $2 \otimes d$ mixed state $\rho _{AC}$,     the
R\'{e}nyi-$\alpha$ entanglement has an analytical expression
\beqa\label{q12}
E_\alpha  \left( {\rho _{AC} } \right) = f_\alpha  \left[ {C^2 \left( {\rho _{AC} } \right)} \right]
\eeqa
where the order $\alpha$ ranges in the region $[(\sqrt 7  - 1)/2,    (\sqrt {13}  - 1)/2]$.

\begin{theorem}\label{E}
For the $n$-qubit generalized W-class states $|\psi\rangle\in H_{A_1}\otimes H_{A_2}\otimes...\otimes H_{A_n}$,     we have
\begin{eqnarray}\label{2}
E_{\alpha}(|\psi\rangle_{A_1|A_2...A_{n}})\leq\sum_{i=2}^{{n}}E_{\alpha}(\rho_{A_1A_{i}}),
\end{eqnarray}
where $\rho_{A_1A_i}$,     $2\leq i\leq n$,     is the $2$-qubit reduced density matrix of $|\psi\rangle$ and the order $\alpha$ ranges in the region $[(\sqrt 7  - 1)/2,    (\sqrt {13}  - 1)/2]$.
\end{theorem}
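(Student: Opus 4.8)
The plan is to collapse the claimed inequality onto a single scalar property of the function $f_\alpha$ from (\ref{q6}), namely its subadditivity on $[0,1]$, and then to extract the admissible window of $\alpha$ from the range on which that property can be maintained.

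First I would rewrite both sides in terms of $f_\alpha$ evaluated at squared concurrences. Because $|\psi\rangle_{A_1|A_2\dots A_n}$ is pure, the definition (\ref{q3}) gives $E_\alpha(|\psi\rangle_{A_1|A_2\dots A_n})=S_\alpha(\rho_{A_1})$, and since $A_1$ is one qubit the two eigenvalues of $\rho_{A_1}$ are $\big(1\pm\sqrt{1-C^2}\,\big)/2$ with $C^2=C^2(\rho_{A_1|A_2\dots A_n})=2[1-\mathrm{Tr}\,\rho_{A_1}^2]$ from (\ref{Concur}); substituting these eigenvalues into (\ref{q3}) reproduces exactly the right-hand side of (\ref{q6}), so $E_\alpha(|\psi\rangle_{A_1|A_2\dots A_n})=f_\alpha[\,C^2(\rho_{A_1|A_2\dots A_n})\,]$ (this is also the pure $2\otimes2^{\,n-1}$ instance of (\ref{q12})). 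For the summands, each $\rho_{A_1A_i}$ is a two-qubit state, so the analytic two-qubit formula (\ref{q5}), valid for $\alpha\ge(\sqrt7-1)/2$, yields $E_\alpha(\rho_{A_1A_i})=f_\alpha[\,C^2(\rho_{A_1A_i})\,]$.

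Next I would install the W-class saturation of the squared concurrence. Using the reduced-state structure of Lemma 1, one has $\mathcal{C}(\rho_{A_1A_i})=2|b_1||b_i|$, while for the pure state $C^2(\rho_{A_1|A_2\dots A_n})=4\det\rho_{A_1}=4|b_1|^2\sum_{i\ge2}|b_i|^2$; hence
\begin{equation}\nonumber
C^2(\rho_{A_1|A_2\dots A_n})=\sum_{i=2}^{n}C^2(\rho_{A_1A_i}).
\end{equation}
Setting $x_i:=C^2(\rho_{A_1A_i})\in[0,1]$ and noting that their sum equals $C^2(\rho_{A_1|A_2\dots A_n})\le1$, the target inequality (\ref{2}) becomes precisely
\begin{equation}\nonumber
f_\alpha\!\Big(\sum_{i=2}^{n}x_i\Big)\le\sum_{i=2}^{n}f_\alpha(x_i),
\end{equation}
that is, the subadditivity of $f_\alpha$ on $[0,1]$.

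It then suffices to prove the two-term version $f_\alpha(x+y)\le f_\alpha(x)+f_\alpha(y)$ for $x,y\ge0$, $x+y\le1$, after which the general case follows by induction on the number of terms. Since $f_\alpha(0)=\tfrac{1}{1-\alpha}\log_21=0$, this two-term bound is the standard consequence of concavity: if $f_\alpha$ is concave on $[0,1]$ then $f_\alpha(x)\ge\tfrac{x}{x+y}f_\alpha(x+y)$ and $f_\alpha(y)\ge\tfrac{y}{x+y}f_\alpha(x+y)$, and adding the two gives the claim. The main obstacle is therefore the concavity itself: I would differentiate (\ref{q6}) twice with respect to $x$—most cleanly through the substitution $\mu=\sqrt{1-x}$, $p=(1-\mu)/2$, $q=(1+\mu)/2$, so that $f_\alpha=\tfrac{1}{1-\alpha}\log_2(p^\alpha+q^\alpha)$—and then show $f_\alpha''(x)\le0$ uniformly on $[0,1]$. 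I expect the sign condition to reduce, at the critical (boundary) configuration, to a quadratic inequality in $\alpha$ of the form $\alpha^2+\alpha-3\le0$, whose positive root is $(\sqrt{13}-1)/2$; combined with the lower cutoff $\alpha\ge(\sqrt7-1)/2$ inherited from the validity of (\ref{q5}), this produces exactly the stated interval $[(\sqrt7-1)/2,(\sqrt{13}-1)/2]$, which is also the range on which the $2\otimes d$ formula (\ref{q12}) holds. The delicate point of the computation will be to verify that $f_\alpha''$ keeps its sign throughout the interior of $[0,1]$ and not merely at the endpoints, so that concavity—and hence subadditivity—genuinely holds across the whole range of $x$.
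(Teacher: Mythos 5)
Your proposal is correct and follows essentially the same route as the paper's own proof: both sides are expressed through $f_\alpha$ via the analytic formulas (\ref{q5}) and (\ref{q12}), the W-class saturation $C^2(|\psi\rangle_{A_1|A_2\dots A_n})=\sum_{i=2}^{n}C^2(\rho_{A_1A_i})$ converts the claim into subadditivity of $f_\alpha$ on $[0,1]$, and subadditivity is obtained from concavity together with $f_\alpha(0)=0$. The only difference is that the paper simply cites Ref.~\cite{WS} for the concavity of $f_\alpha$ on $\alpha\in[(\sqrt7-1)/2,(\sqrt{13}-1)/2]$, whereas you propose to verify it by direct differentiation; your anticipated condition $\alpha^2+\alpha-3\le0$, with positive root $(\sqrt{13}-1)/2$, is indeed the correct source of the upper endpoint.
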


\begin{proof}
For the $n$-qubit generalized W-class states $|\psi\rangle$,     we have
\begin{eqnarray}\nonumber
E_{\alpha}(|\psi\rangle_{A_1|A_2...A_{n}})
&=&f_\alpha\left(C^2(|\psi\rangle_{A_1|A_2...A_{n}})\right)\\[1mm]\nonumber
&=&f_\alpha(\sum_{i=2}^{n}C^2(\rho_{A_1A_{i}}))\\[1mm]\nonumber
&\leq& \sum_{i=2}^{n}f_\alpha(C^2(\rho_{A_1A_{i}}))\\[1mm]\nonumber
&=& \sum_{i=2}^{n}E_\alpha(\rho_{A_1A_{i}}),
\end{eqnarray}
where $f_\alpha \! \left( x \right)\!= \!\frac{1}{{1 - \alpha }}\!\log _2 \!\left[ {\left( {\frac{{1 \!-\!
\sqrt {1 - x} }}{2}} \right)^\alpha  \!\!\!\! +\! \left( {\frac{{1 \!+\! \sqrt {1 - x} }}{2}}
\right)^\alpha  } \right]$.
We have used in the first and last equalities that the entanglement
of formation obeys the relation (19). The second equality is due to the fact that
$C^2(|\psi\rangle_{A_1...A_{n}})=\sum_{i=2}^{n}C^2(\rho_{A_1A_{i}}).$
The inequality is due to the fact that the R\'{e}nyi-$\alpha$ entanglement $E_\alpha \left( {C^2 } \right)$ with
$\alpha\in[(\sqrt7-1)/2,    (\sqrt{13}-1)/2]$ is monotonic increasing and concave as a function of the squared
concurrence ${C^2 }$ \cite{WS}.
\end{proof}

 Next we will present an upper bound of SR$\alpha$E. Before giving the result,   we consider the following lemma.
\begin{lemma}\label{c=ca}\cite{WS}
Let $\psi_{A_1\cdots A_n}$ be a generalized W-class state in (11).
For any $m$-qubit subsystems $A_1A_{j_1}\cdots A_{j_{m-1}}$ of $A_1\cdots A_n$ with $2 \leq m \leq  n-1$,
the reduced density matrix $\rho_{A_1A_{j_1}\cdots A_{j_{m-1}}}$ of $\psi_{A_1\cdots A_n}$ is a mixture of a $m$-qubit generalized W-class state
and vacuum.
\end{lemma}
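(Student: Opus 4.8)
The plan is to compute the reduced density matrix directly, exploiting the defining feature of the generalized W-class ansatz (9): every basis vector appearing in $|\psi\rangle$ carries at most a single excitation $|1\rangle$. First I would split the $n$ sites into the retained block $S=\{A_1,A_{j_1},\dots,A_{j_{m-1}}\}$, which always contains $A_1$, and the traced-out block $T$ formed by the remaining $n-m$ qubits. Because each term of $|\psi\rangle$ places its lone $|1\rangle$ on a single site, every summand factorizes cleanly across the cut $S|T$: the vacuum term and all terms whose excitation sits in $S$ carry $\ket{0\cdots0}_T$ on the $T$ side, whereas all terms whose excitation sits in $T$ carry $\ket{0\cdots0}_S$ on the $S$ side.

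Concretely, I would rewrite (9) as
\beq\nonumber
|\psi\rangle=|\tilde{W}\rangle_S\otimes\ket{0\cdots0}_T+\ket{0\cdots0}_S\otimes|v\rangle_T,
\eeq
where $|\tilde{W}\rangle_S=a\ket{0\cdots0}_S+\sum_{k\in S}b_k\,\ket{\cdots1_k\cdots}_S$ is an (unnormalized) $m$-qubit generalized W-class state supported on $S$, and $|v\rangle_T=\sum_{k\in T}b_k\,\ket{\cdots1_k\cdots}_T$ collects the excitations that are discarded.

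Next I would form $|\psi\rangle\langle\psi|$ and trace out $T$. The one genuinely load-bearing observation is that the two off-diagonal (cross) terms vanish: every single-excitation vector $\ket{\cdots1_k\cdots}_T$ is orthogonal to the $T$-vacuum $\ket{0\cdots0}_T$, so $\langle 0\cdots0|v\rangle_T=0$ and hence $\mathrm{Tr}_T\!\big(\ket{0\cdots0}_T\langle v|\big)=0$ together with its conjugate. The surviving diagonal contributions are $\mathrm{Tr}_T(\ket{0\cdots0}_T\langle 0\cdots0|)=1$ and $\mathrm{Tr}_T(|v\rangle_T\langle v|)=\sum_{k\in T}|b_k|^2$, which leaves
\beq\nonumber
\rho_{A_1A_{j_1}\cdots A_{j_{m-1}}}=|\tilde{W}\rangle_S\langle\tilde{W}|+\Big(\sum_{k\in T}|b_k|^2\Big)\ket{0\cdots0}_S\bra{0\cdots0}.
\eeq

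Finally I would read off that this is exactly a statistical mixture of an $m$-qubit generalized W-class state and the vacuum: writing $p=|a|^2+\sum_{k\in S}|b_k|^2$ and $q=\sum_{k\in T}|b_k|^2$, the overall normalization of $|\psi\rangle$ gives $p+q=1$, so $|\tilde{W}\rangle_S/\sqrt{p}$ is a properly normalized $m$-qubit W-class state and $\rho_{A_1A_{j_1}\cdots A_{j_{m-1}}}$ is the convex combination of it and $\ket{0\cdots0}_S\bra{0\cdots0}$ with weights $p$ and $q$. I expect no real obstacle beyond keeping the $S|T$ index bookkeeping consistent; all the content lies in the vanishing of the cross terms, which is forced by the single-excitation orthogonality structure of the W-class form.
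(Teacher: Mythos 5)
Your proof is correct. The paper itself gives no argument for this lemma --- it is stated with a citation to Ref.~\cite{WS} (and the result goes back to Kim, Das and Sanders, Ref.~\cite{JSK2}) --- so you have supplied a self-contained proof where the paper simply imports the result. Your route is the standard one and is in fact the direct generalization of what the paper does explicitly in the $m=2$ case inside its proof of Lemma~1: there $\rho_{A_1A_i}=|x\rangle\langle x|+|y\rangle\langle y|$ with $|x\rangle = a\ket{00}+b_1\ket{10}+b_i\ket{01}$ an unnormalized two-qubit W-class vector and $|y\rangle=\sqrt{\sum_{k\neq i}|b_k|^2}\,\ket{00}$ a weighted vacuum, which is precisely your $|\tilde W\rangle_S$ and $\sqrt{q}\,\ket{0\cdots 0}_S$ specialized to $S=\{A_1,A_i\}$. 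Your key step --- the vanishing of the cross terms $\mathrm{Tr}_T(\ket{0\cdots 0}_T\langle v|)$ because $|v\rangle_T$ is a superposition of single-excitation vectors orthogonal to the $T$-vacuum --- is exactly the load-bearing observation, and your normalization bookkeeping ($p+q=1$, with $|\tilde W\rangle_S/\sqrt p$ a properly normalized $m$-qubit generalized W-class state) is right; note that the mixture need not be an orthogonal decomposition, and indeed $\langle \tilde W|0\cdots 0\rangle_S = \bar a \neq 0$ in general, which is harmless since the lemma only asserts a convex mixture. Two trivial caveats: the degenerate case $p=0$ (where $\rho_S$ is pure vacuum) is worth a one-line remark, and the paper's normalization condition $|a|^2+\sum_i |b_i|=1$ in Eq.~(9) is evidently a typo for $|a|^2+\sum_i |b_i|^2=1$, which is what your argument (correctly) uses.
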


In the following,  we assume the $m$-qubit subsystems $A_1A_{j_1}\cdots A_{j_{m-1}}$ of $A_1\cdots A_n$ with $2 \leq m \leq  n-1$ is exactly $A_1\cdots A_n$. Then we can have the result below.

\begin{theorem}\label{TH4}
For the $n$-qubit generalized W-class states $|\psi\rangle\in H_{A_1}\otimes H_{A_2}\otimes...\otimes H_{A_n}$,     we have
\begin{eqnarray}\label{2}
E_{\alpha}^2(\rho_{A_1|A_2...A_{n}})\leq(n-1)\sum_{i=2}^{{n}}E_{\alpha}^2(\rho_{A_1A_{i}}),
\end{eqnarray}
where $E^2_\alpha(\rho_{A_1|A_2\dots A_n})$ quantifies the entanglement in the partition
$A_1|A_2\dots A_n$ and $E_\alpha^2(\rho_{A_1A_i})$ quantifies the one in two-qubit subsystem $A_1A_i$
with the order $\alpha\in[(\sqrt7-1)/2,    (\sqrt{13}-1)/2]$.
\end{theorem}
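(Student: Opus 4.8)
The plan is to bootstrap directly from Theorem~\ref{E}, which already furnishes the linear (unsquared) upper bound $E_\alpha(|\psi\rangle_{A_1|A_2\dots A_n})\leq\sum_{i=2}^{n}E_\alpha(\rho_{A_1A_i})$ over precisely the same range $\alpha\in[(\sqrt7-1)/2,(\sqrt{13}-1)/2]$. Because every term here is nonnegative, the inequality survives squaring, so the entire statement collapses to an elementary estimate for the square of a sum of $n-1$ nonnegative reals.

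First I would observe that for the pure state $|\psi\rangle$ one has $E_\alpha^2(\rho_{A_1|A_2\dots A_n})=E_\alpha^2(|\psi\rangle_{A_1|A_2\dots A_n})$, since the R\'{e}nyi-$\alpha$ entanglement across the cut $A_1|A_2\dots A_n$ of a pure state is by definition the R\'{e}nyi-$\alpha$ entropy of the marginal $\rho_{A_1}$; thus the left-hand side of the claim is exactly the square of the left-hand side of the Theorem~\ref{E} bound. Squaring that bound then yields
\[
E_\alpha^2(\rho_{A_1|A_2\dots A_n})\leq\Bigl(\sum_{i=2}^{n}E_\alpha(\rho_{A_1A_i})\Bigr)^{2}.
\]
Next I would invoke the Cauchy--Schwarz inequality in the familiar form $\bigl(\sum_{i=1}^{k}a_i\bigr)^2\leq k\sum_{i=1}^{k}a_i^{2}$, taking $a_i=E_\alpha(\rho_{A_1A_i})$ and $k=n-1$ equal to the number of indices $i=2,\dots,n$. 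This converts the square of the sum into $(n-1)\sum_{i=2}^{n}E_\alpha^2(\rho_{A_1A_i})$, which is precisely the asserted inequality.

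There is essentially no analytic obstacle remaining: the delicate ingredient---the monotonicity and concavity of $f_\alpha$ as a function of $C^2$, which is what restricts $\alpha$ to $[(\sqrt7-1)/2,(\sqrt{13}-1)/2]$---has already been consumed inside Theorem~\ref{E}, so nothing further about $f_\alpha$ need be established. The only thing to watch is the combinatorial bookkeeping: one must confirm that exactly $n-1$ two-qubit marginals are summed, so that the Cauchy--Schwarz multiplicity matches the stated prefactor $(n-1)$. The preceding mixture-of-W-and-vacuum lemma contributes only indirectly, guaranteeing that each $\rho_{A_1A_i}$ is a genuine two-qubit state for which the relation $E_\alpha=f_\alpha(C^2)$ is available; beyond supplying this structural backdrop it plays no active part in the estimate.
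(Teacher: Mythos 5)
Your argument reproduces, essentially verbatim, the first half of the paper's own proof: for the pure state $|\psi\rangle$ the claim does follow from squaring Theorem 3 and the estimate $\bigl(\sum_{i=2}^{n}a_i\bigr)^2\leq(n-1)\sum_{i=2}^{n}a_i^2$, and that portion of your write-up is correct. The gap is one of scope. The remark the paper places between Lemma 2 and the theorem (``we assume the $m$-qubit subsystems $A_1A_{j_1}\cdots A_{j_{m-1}}$ \dots is exactly $A_1\cdots A_n$'') means the state $\rho_{A_1\cdots A_n}$ in the statement is intended to include reduced density matrices of a larger generalized W-class state, which by Lemma 2 are genuinely \emph{mixed}: mixtures of an $n$-qubit W-class state and vacuum. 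Your opening identification $E_\alpha^2(\rho_{A_1|A_2\dots A_n})=E_\alpha^2(|\psi\rangle_{A_1|A_2\dots A_n})$ fails in that case --- for a mixed state the R\'enyi-$\alpha$ entanglement is a convex roof, not the entropy of the marginal $\rho_{A_1}$ --- and Theorem 3 as stated is a bound for the pure state $|\psi\rangle$, so it cannot simply be squared.

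The paper closes this case with a second argument that you omit, and whose key ingredient you explicitly dismiss: it writes $\rho_{A_1A_2\dots A_n}=\sum_j p_j\ket{\psi_j}\bra{\psi_j}$ with each $\ket{\psi_j}$ a generalized W-class state or vacuum, bounds $E_\alpha(\rho_{A_1|A_2\dots A_n})$ by the average $\sum_j p_j E_\alpha(\ket{\psi_j})$ via the convex-roof definition, applies Theorem 3 to each pure component, exchanges the sums over $j$ and $i$, and invokes Cauchy--Schwarz once more on the $n-1$ averaged terms. Your misreading of Lemma 2 is the telltale sign of the missing case: you take it as merely certifying that each $\rho_{A_1A_i}$ is ``a genuine two-qubit state,'' which is automatic and needs no lemma, whereas its actual role is to describe the mixed global state $\rho_{A_1\cdots A_n}$ to which the pure-state argument does not directly apply. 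In short: correct and identical in method to the paper for the pure-state reading of the statement, but missing the mixed-state half that the paper's setup and proof explicitly require.
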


\begin{proof}
We first consider the monogamy relation in an $n$-qubit pure state
$\psi_{A_1A_2\dots A_n}$. Thus we can obtain
\begin{eqnarray}\nonumber
E_{\alpha}^2(|\psi\rangle_{A_1|A_2...A_{n}})
&\leq&(\sum_{i=2}^{{n}}E_{\alpha}(\rho_{A_1A_{i}}))^2\\[1mm]\nonumber
&\leq& (\sum_{i=2}^{{n}}1^2)(\sum_{i=2}^{{n}}E_{\alpha}^2(\rho_{A_1A_{i}}))\\[1mm]\nonumber
&=& (n-1)(\sum_{i=2}^{{n}}E_{\alpha}^2(\rho_{A_1A_{i}})),
\end{eqnarray}
where in the first inequality we have used Theorem 3 and $a^2\leq b^2$ for $0\leq a\leq b$,  and in the second inequality we have used the Cauchy-Schwarz inequality.

Next from Lemma2,   we consider $\rho_{A_1\cdots A_n}$ is a mixture of a $n$-qubit generalized W-class state
and vacuum. Then since we have the pure decomposition of $\rho_{A_1\cdots A_n}$,
\begin{eqnarray}\nonumber
\rho_{A_1A_2\dots A_n}=\sum_j p_j \ket{\psi_j}_{A_1A_2\dots A_n}\bra{\psi_j},
\end{eqnarray}
Thus,   we can obtain
\begin{eqnarray}\nonumber
E_{\alpha}^2(\rho_{A_1|A_2\dots A_n})
&=&[\sum_j p_j E_\alpha(\ket{\psi_j}_{A_1|A_2\dots A_n})]^2\\[1mm]\nonumber
&\leq&[\sum_j p_j (\sum_{i=2}^{{n}}E_{\alpha}(\rho_{A_1A_{i}}))]^2\\[1mm]\nonumber
&=&[\sum_{i=2}^{{n}}(\sum_j p_j E_{\alpha}(\rho_{A_1A_{i}}))]^2\\[1mm]\nonumber
&\leq& (\sum_{i=2}^{{n}}1^2)[\sum_{i=2}^{{n}}(\sum_j p_j E_{\alpha}(\rho_{A_1A_{i}}))^2]\\[1mm]\nonumber
&=& (n-1)(\sum_{i=2}^{{n}}E_{\alpha}^2(\rho_{A_1A_{i}})),
\end{eqnarray}
where in the first inequality we have used Theorem 3 and $a^2\leq b^2$ for $0\leq a\leq b$ ,  and in the second inequality we have used the Cauchy-Schwarz inequality. The last equality is due to $\sum_j p_j=1$.
\end{proof}

As an example,   we still consider the $5$-qubit generalized $W$-class states (9) with
$a=b_2=\frac{1}{\sqrt{10}}$,    $b_1=\frac{1}{\sqrt{15}}$,    $b_3=\sqrt{\frac{2}{15}}$,    $b_4=\sqrt{\frac{3}{5}}$.
We have
$$E_{\alpha}^2(\rho_{A_1|A_{2}A_{3}})\leq2\left(E_{\alpha}^2(\rho_{A_1|A_{2}})+E_{\alpha}^2(\rho_{A_1|A_{3}})\right)
$$
and
$$E_{\alpha}^2(\rho_{A_1|A_{2}A_{3}A_{4}})\leq3\left(E_{\alpha}^2(\rho_{A_1|A_{2}})+E_{\alpha}^2(\rho_{A_1|A_{3}})+E_{\alpha}^2(\rho_{A_1|A_{4}})\right)
$$
where
$$E_{\alpha}(\rho_{A_1|A_{2}})=\!\frac{1}{{1 - \alpha }}\!\log _2 \!\left[ {\left( {\frac{{1 \!-
\sqrt {\frac{73}{75}} }}{2}} \right)^\alpha  \!\!\!\! +\! \left( {\frac{{1 \!+\sqrt { \frac{73}{75}} }}{2}}
\right)^\alpha  } \right]
$$

$$E_{\alpha}(\rho_{A_1|A_{3}})=\!\frac{1}{{1 - \alpha }}\!\log _2 \!\left[ {\left( {\frac{{1 \!-\!
\sqrt {\frac{217}{225}} }}{2}} \right)^\alpha  \!\!\!\! +\! \left( {\frac{{1 \!+\! \sqrt {\frac{217}{225}} }}{2}}
\right)^\alpha  } \right]
$$
and
$$E_{\alpha}(\rho_{A_1|A_{4}})=\!\frac{1}{{1 - \alpha }}\!\log _2 \!\left[ {\left( {\frac{{1 \!-\!
\sqrt {\frac{63}{75}} }}{2}} \right)^\alpha  \!\!\!\! +\! \left( {\frac{{1 \!+\! \sqrt {\frac{63}{75}} }}{2}}
\right)^\alpha  } \right]
$$
with the order $\alpha\in[(\sqrt7-1)/2,    (\sqrt{13}-1)/2]$. See Fig. 2.
\begin{figure}[h]
\normalsize
\renewcommand{\figurename}{Fig.}
\centering
\includegraphics[width=0.45\textwidth,  height=0.35\textwidth]{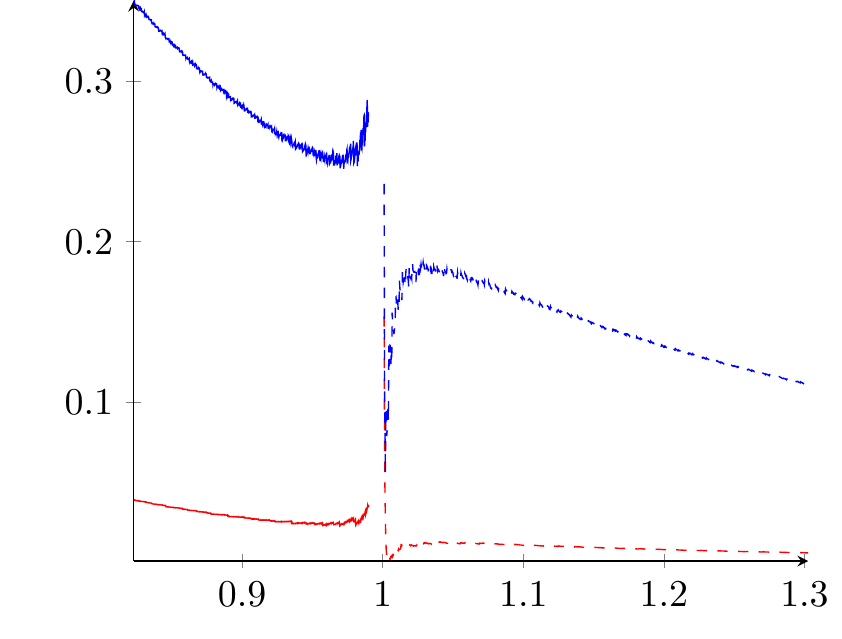}
\caption{{\small solid red line is the upper bound of $E_{\alpha}^2(\rho_{A_1|A_{2}A_{3}})$ and solid blue line is the upper bound of $E_{\alpha}^2(\rho_{A_1|A_{2}A_{3}A_{4}})$ as functions of $\alpha$ when $\alpha$ ranges in the region $[(\sqrt 7  - 1)/2,   0.99]$. When $\alpha$ ranges in the region $[1.001,   (\sqrt 13  - 1)/2]$,   as the red dashed line and the blue dashed line show,   we do not have an upper bound of $E_{\alpha}^2(\rho_{A_1|A_{2}A_{3}})$ and $E_{\alpha}^2(\rho_{A_1|A_{2}A_{3}A_{4}})$ in this example.}}
\label{Fig.2}
\end{figure}

From Fig.2,   one gets that the optimal upper
bounds of $E_{\alpha}^2(\rho_{A_1|A_{2}A_{3}})$ and $E_{\alpha}^2(\rho_{A_1|A_{2}A_{3}A_{4}})$ are $0.02334$ and $0.24211$ attained at $\alpha=0.971$ when $\alpha\in[(\sqrt7-1)/2,    0.99]$. This upper bounds can be easily generalized to arbitrary $n$-qubit generalized W-class states $|\psi\rangle\in H_{A_1}\otimes H_{A_2}\otimes...\otimes H_{A_n}$.

\section{Conclusions and remarks}

We have investigated the monogamy relations of muliti-qubit generalized W-class states in terms of CRENoA and SR$\alpha$E. We have proved that the monogamy inequality of $x$-th power for CRENoA when $x\geq2$ and $x\leq0$. Our result shows that choosing CRENoA as a mathematical characterization of the monogamy of entanglement is better than choosing CoA for $x\geq2$. We also show the monogamy inequality for SR$\alpha$E when $\alpha$ ranges in the region $[(\sqrt 7  - 1)/2,    (\sqrt {13}  - 1)/2]$. We can find the optimal upper bound for $E_{\alpha}^2(\rho_{A_1|A_2...A_{n}})$ when the order $\alpha\in[(\sqrt7-1)/2,    (\sqrt{13}-1)/2]$ by using our approach in Theorem 4.
It is still an open problem to be answered that whether there exists the monogamy inequality for SR$\alpha$E when $\alpha\geq(\sqrt {13}  - 1)/2) $ in generalized W-class states.

\section*{Acknowledgments}
 This work is supported by the NSFC 11571119 and NSFC 11475178.


\begin{thebibliography}{99}
\bibitem{CKW} V. Coffman,  J. Kundu and W. K. Wootters,  Phys. Rev. A \textbf{ 61},  052306 (2000).

\bibitem{OV} T. Osborne and F. Verstraete,  Phys. Rev. Lett. \textbf{ 96},  220503 (2006).

\bibitem{KW} M. Koashi and A. Winter,  Phys. Rev. A \textbf{69},  022309 (2004).

\bibitem{t1} F. Mintert,   M. Ku\'{s},   and A. Buchleitner,   Phys. Rev. Lett. \textbf{92},   167902 (2004).

\bibitem{t2} K. Chen,   S. Albeverio,   and S. M. Fei,   Phys. Rev. Lett. \textbf{95},   040504 (2005).

\bibitem{t3} H. P. Breuer,   J. Phys. A: Math. Gen. \textbf{39},   11847 (2006).

\bibitem{t4} H. P. Breuer,   Phys. Rev. Lett. \textbf{97},   080501 (2006).

\bibitem{t5} J. I. de Vicente,   Phys. Rev. A \textbf{75},   052320 (2007).

\bibitem{t6} C. J. Zhang,   Y. S. Zhang,   S. Zhang,   and G. C. Guo,   Phys. Rev. A \textbf{76},   012334 (2007).

\bibitem{JMR} J. M. Renes and M. Grassl,  Phys. Rev. A \textbf{74},  022317 (2006).

\bibitem{WKW} W. K. Wootters,  Phys. Rev. Lett. \textbf{80},  2245 (1998).

\bibitem{TJO} T. J. Osborne,   and F. Verstraete,   Phys. Rev. Lett. \textbf{96},   220503 (2006).

\bibitem{YKB} Y. K. Bai,   M. Y. Ye,   and Z. D. Wang,   Phys. Rev. A \textbf{80},   044301(2009).

\bibitem{ZXN1} X. N. Zhu,   and S. M. Fei,   Physical Review A \textbf{90},   024304 (2014)

\bibitem{JSK1} J. S. Kim,   Phys. Rev. A \textbf{90},    062306 (2014).

\bibitem{SL} S. Lee,   D. P. Chi,   S. D. Oh,   and J. Kim,   Phys. Rev. A \textbf{68},   062304 (2003).

\bibitem{RH}R. Horodecki,   P. Horodecki,   and M. Horodecki,   Phys. Rev. A \textbf{210},   377 (1996).

\bibitem{YL} Y. Luo,   Y. M. Li,   Phys. Rev. A \textbf{362},   511-520 (2015).

\bibitem{WS} W.Song,   Y. K. Bai,   M.Yang,   M.Yang and Z.L. Cao,   Phys. Rev. A \textbf{93},    022306  (2015).

\bibitem{PR} P. Rungta,   V. Bu$\check{\text{z}}$ek,   C. M. Caves,   M. Hillery,   and G. J. Milburn,   Phys. Rev. A \textbf{64},   042315 (2001).

\bibitem{CSY} C. S. Yu,   and H. S. Song,   Phys. Rev. A \textbf{77},    032329 (2008).

\bibitem{JSK2} J. S. Kim,   A. Das and B. C. Sanders,   Phys. Rev. A \textbf{79},   012329 (2008).

\bibitem{ZXN2} X. N. Zhu,   S. M. Fei,   Quant. Inf. Process \textbf{16},   53  (2017).

\bibitem{JSK3}J. S. Kim,    and B. C. Sanders,   Phys. Rev. A \textbf{43},   445305 (2010).

\bibitem{YXW}Y. X. Wang,   L. Z. Mu,   and V.Vedral,   and H.Fen,   Phys. Rev. A \textbf{93},   022324 (2016).

\end{thebibliography}
\end{document}